\newtheorem{theorem}{Theorem}
\newtheorem{lemma}[theorem]{Lemma}
\newenvironment{proof}[1][Proof]{\textbf{#1.} }{\ \rule{0.5em}{0.5em}}
\begin{document}
\title{Nearly associative deformation quantization} \date{\empty}




\author{{Dmitri Vassilevich$^{1,2}$ and Fernando Martins Costa Oliveira$^{1}$}\\
{ $^1$CMCC, Universidade Federal do ABC,}\\ {Avenida dos Estados 5001, CEP 09210-580, Santo Andr\'e, SP, Brazil }\\
{$^2$Department of Physics, Tomsk State University, 634050 Tomsk, Russia }}

\maketitle


\begin{abstract}
We study several classes of non-associative algebras as possible candidates for deformation quantization in the direction of a Poisson bracket that does not satisfy Jacobi identities. We show that in fact alternative deformation quantization algebras require the Jacobi identities on the Poisson bracket and, under very general assumptions, are associative. At the same time, flexible deformation quantization algebras exist for any Poisson bracket.
\end{abstract}

\section{Introduction} \label{sec:intro}
The deformation quantization program considers a deformation of the algebra of functions on some manifold in the direction of a given Poisson bracket. The deformed algebra is an algebra of formal power series of the deformation parameter equipped with a new product, called the star product. In the classical setting \cite{BFFLS,DS,Waldmann}, the star product is assumed to be associative. Hence the Poisson bracket has to satisfy the Jacobi identities. 

However, some modern applications to magnetic backgrounds in field theory \cite{Jackiw:1984rd}, non-geometric backgrounds in string theory \cite{Blumenhagen:2010hj,Lust:2010iy,Gunaydin:2016axc} as well as some approaches to open strings and D-branes \cite{Cornalba:2001sm,Herbst:2003we} require quantization of Poisson brackets that do not satisfy the Jacobi identities, so that the corresponding algebras have to be non-associative.

Many examples of non-associative start products with varying degree of generality have been constructed \cite{Cornalba:2001sm,Herbst:2003we,Mylonas:2012pg,Bakas:2013jwa,Mylonas:2013jha,Kupriyanov:2015dda,Kupriyanov:2017oob}. However, it still remains unclear which condition can be used instead of the associativity in this context. The alternativity condition (total antisymmetry of the associator) may seem a natural choice that has been widely discussed in the literature, see e.g. \cite{Bojowald:2014oea,Kupriyanov:2016hsm,Kupriyanov:2017oob}. Despite many efforts not a single example of an alternative non-associative star product has been constructed. Moreover, according to \cite{Bojowald:2016lnl} the so-called monopole star products cannot be alternative.

In this paper we study the deformation quantization algebras that are called nearly associative in the monograph \cite{IvanBook}. These are alternative, right alternative and flexible algebras. They are defined by requesting that the associator is anti-symmetric in all arguments (alternative algebras) or anti-symmetric in some pairs of arguments (right alternative and flexible algebras). First, we analyze the Poisson bracket. It has been demonstrated in \cite{Kupriyanov:2016hsm} that for an alternative deformation quantization the Poisson bracket has to satisfy the Malcev identity. Here we prove a stronger statement that the Jacobi identity has to be satisfied in this case. Then, by using the Kleinfeld Theorem, we show that right alternative deformation quantization algebras are alternative. Our main result is that alternative star products are associative under very general assumptions. The situation with flexible algebras is very different: as we show for any Poisson bracket there is always a flexible star product.

\section{Main definitions} \label{sec:def}
Throughout this paper we shall deal with $C^\infty$ functions on $\mathbb{R}^n$ and thus with $C^\infty$ structures. 
A smooth bivector field $P$ defines a Poisson bracket\footnote{Summation over repeated indices is understood.} of smooth functions $f$ and $g$ as
\begin{equation}
\{ f,g\}=P(\mathrm{d}f,\mathrm{d}g)=P^{ij}\partial_i f \cdot \partial_j g \label{PB}
\end{equation}
which makes $C^\infty(\mathbb{R}^n)$ with the usual point-wise product $f\cdot g$ a Poisson algebra, $\{ f, g\cdot h\}=\{ f, g \} \cdot h + g\cdot \{f , g\}$. Let us define the Jacobiator as
\begin{equation}
\{f,g,h\}=\{ f,\{g ,h\}\}+\{ h,\{f ,g\}\}+\{ g,\{h ,f\}\}\,.\label{Jac}
\end{equation}
Any Poisson bracket is anti-symmetric, while any Jacobiator is totally anti-symmetric in all its' arguments.
If a Poisson bracket satisfies the Jacobi identity $\{f,g,h\}=0$, we call it a Poisson-Lie bracket. 

Frequently, the Poisson bracket for a generic bivector is called a quasi Poisson bracket, while the name Poisson bracket is used for Poisson-Lie brackets. We believe that our terminology (borrowed from the literature on nonassociaitve algebras) reflects better the underlying algebraic structure, especially since another type of the brackets is going to be introduced in the next section.

As in the classical paper \cite{BFFLS}, we define the star product as a product on the space of formal power series $C^\infty(\mathbb{R}^n)[[\lambda ]]$ by the formula
\begin{equation}
f\star g = f \cdot g +\sum_{r=1}^\infty \lambda^r C_r(f,g) \,,\label{starex}
\end{equation}
where $C_r$'s are bidifferential operators, and
\begin{equation}
C_1(f,g)=\{ f, g\} \,. \label{C1}
\end{equation}
Bidifferential operators are supposed to be differential, rather that pseudo-differential, operators in each of the arguments. Thus, the operators $C_r(f,\ . \ )$ and $C_r(\ . \ ,g)$ are local.
We shall denote the coefficients in front of $\lambda^p$ in the formal expansion of elements of $C^\infty(\mathbb{R}^n)[[\lambda ]]$ by subscripts in round brackets, $f=\sum_{r=o}^\infty \lambda^r f_{(r)}$. 

Equivalence of two star products $\star$ and $\star'$ means an algebra isomorphism that preserves the \emph{antisymmetrized} part of $C_1$. Since both star products should be expressed through bidifferential operators, we are left with the equivalence through Kontsevich gauge transformations \cite{Kontsevich:1997vb} $D(f\star' g)=(Df)\star (Dg)$ where $D=1+\sum_{r=1} \lambda^r D_r$ with some differential operators $D_r$.

One may add a symmetric part to $C_1(f,g)$, but such a part can be removed by passing to an equivalent product \cite{DS}. We also request that the unit function remains a unity of the deformed algebra. For associative deformations this is always true up to an equivalence \cite{DS}.

We shall call $C^\infty(\mathbb{R}^n)[[\lambda ]]$ endowed with a star product satisfying (\ref{starex}) and (\ref{C1})and with a unity given by the unit function a deformation quantization algebra (DQA).

Let us define an associator $A(f,g,h)$ through the equation
\begin{equation}
A(f,g,h)=f\star (g\star h)-(f\star g)\star h \,.\label{Ass}
\end{equation}

An DQA is called \emph{associative} if
\begin{equation}
A(f,g,h)\equiv 0 \,. \label{defass}
\end{equation}
There are three important classes of nearly associative algebras \cite{IvanBook}. 
A DQA is \emph{alternative} if $A(f,g,h)$ is totally anti-symmetric in all arguments, and it is \emph{flexible} if
\begin{equation}
A(f,g,f)\equiv 0\,.\label{difflex}
\end{equation}
A DQA is called \emph{right alternative} if it has the following identities
\begin{eqnarray}
&& A(f,g,g)=0\,, \label{ral}\\
&& ((f\star g)\star h)\star g=f \star ((g\star h)\star g) \,.\label{Mouf}
\end{eqnarray}
In a unital algebra, the second identity (\ref{Mouf}) (called the right Moufang identity) implies the first one (\ref{ral}).

\section{Poisson-Malcev bracket}\label{sec:PM}
It was demonstrated in \cite{Kupriyanov:2016hsm} that if a DQA is alternative, the corresponding Poisson bracket has to satisfy the Malcev identity
\begin{equation}
\{ h,f,\{h,g\}\}=\{ \{ h, f,g\},h\} \,,\label{MP}
\end{equation}
which makes it a Poisson-Malcev bracket. Here we prove the following

\begin{lemma}
Every Poisson-Malcev bracket is a Poisson-Lie bracket.
\end{lemma}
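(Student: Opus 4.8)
The plan is to exploit the fact that, unlike for abstract Malcev algebras, the bracket here is built from a bivector on $\mathbb{R}^n$ and is moreover a derivation of the commutative pointwise product (the Leibniz rule). The key observation is that the Jacobiator $\{f,g,h\}$ is not merely a trilinear map but is itself a \emph{triderivation}: being totally antisymmetric and a derivation in each slot, it is given by a trivector field, $\{f,g,h\} = J^{ijk}\,\partial_i f\,\partial_j g\,\partial_k h$ with $J^{ijk}$ totally antisymmetric. Similarly, the Malcev identity (\ref{MP}) is an identity between differential operators applied to $f,g,h$, and since both sides are derivations in each of $f$, $g$ and $h$ separately, it suffices to check it on coordinate functions $x^i$; equivalently, it becomes a pointwise polynomial identity in the components $P^{ij}$ and their derivatives. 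So the whole problem reduces to a finite computation in the jet of $P$ at an arbitrary point.

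The concrete steps I would carry out are: (i) Write the Malcev identity (\ref{MP}) out in components. The left-hand side $\{h,f,\{h,g\}\}$ involves the Jacobiator with one entry itself a bracket, hence second derivatives of $P$; the right-hand side $\{\{h,f,g\},h\}$ is the Jacobiator of the trivector contracted once more with $P$. (ii) Because of the Leibniz rule, feed in $h=x^a$, $f=x^b$, $g=x^c$ and collect coefficients of $\partial_a h_{(0)}\partial_b f_{(0)}\partial_c g_{(0)}$; this yields a system of PDE-type algebraic identities relating $J^{ijk}$, $\partial J$, and $P$. (iii) By totally antisymmetrizing appropriately and using the antisymmetry of both $P^{ij}$ and $J^{ijk}$, extract from this system the purely algebraic consequence $J^{ijk}P^{l m}(\cdots)=0$-type relations, and then argue that $P$ ranging over an open set of bivectors (or a suitably generic point) forces $J\equiv 0$. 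An alternative, cleaner route: use that a Malcev algebra satisfies the ``linearized'' Sagle identities and that, on a commutative base, the trivector $J$ must be a $\wedge$-derivation; combine the Leibniz rule with the Malcev identity specialized to $g=h$ or to repeated arguments to collapse the multiderivation to zero.

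I would expect the main obstacle to be step (iii): the Malcev identity does not literally say ``$J=0$'' even after peeling off the Leibniz structure — it says something like ``$J$ contracted with $P$ and with $\partial P$ vanishes'' — so one must genuinely use that $P$ is an arbitrary bivector field (not a fixed one) to conclude $J=0$ pointwise. This is where care is needed: one has to make sure the chosen test functions (coordinates, and perhaps quadratic functions to probe $\partial P$) separate all the components of $J$, and that no degenerate configuration of $P$ could support a nonzero $J$. I would handle this by observing that (\ref{MP}) must hold for \emph{every} smooth $P$, so in particular we may vary $P$ independently of its first jet at a point; freezing the $1$-jet of $P$ while letting the $0$-jet vary linearly kills all terms except those proportional to $J$ times a generic antisymmetric matrix, and then nondegeneracy of that pairing gives $J^{ijk}=0$ everywhere, i.e. the Jacobi identity. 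The remaining parts of the argument are routine index bookkeeping using the total antisymmetry of $J$ and the Leibniz rule.
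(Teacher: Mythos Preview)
There is a genuine gap, and it sits exactly where you flagged the difficulty yourself. In step (iii) you write that ``(\ref{MP}) must hold for \emph{every} smooth $P$'' and propose to vary the $0$-jet of $P$ while freezing its $1$-jet. But that is not what the hypothesis gives you. The lemma concerns a \emph{fixed} bivector $P$ for which the Malcev identity (\ref{MP}) holds for all smooth $f,g,h$; you must prove that \emph{this particular} $P$ has $J^{ijk}=0$. You are free to vary the test functions, not the bivector. Once $P$ is fixed, the component identity you extract in (ii) is a single relation among $P,\partial P,J,\partial J$ at each point, and your nondegeneracy-of-pairing argument no longer has anything to pair against.

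A secondary issue: the claim that ``both sides are derivations in each of $f$, $g$ and $h$'' is false as stated, because $h$ occurs twice in (\ref{MP}); the Malcev expression is second order in $h$. Polarizing $h\mapsto h_1+h_2$ repairs this and does reduce the identity to a tensorial statement on coordinate functions, but what you obtain is
\[
J^{abl}\partial_l P^{dc}+J^{dbl}\partial_l P^{ac}=P^{ld}\partial_l J^{abc}+P^{la}\partial_l J^{dbc},
\]
a differential relation in the given $P$, not an algebraic constraint with a free parameter you can vary. From this alone it is not clear how to force $J=0$ at a point where $P$ happens to be degenerate or where the first jet of $P$ conspires.

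The paper's argument avoids this by bringing in the commutative (pointwise) product, which your purely bracket-based computation never uses. From the Poisson--Malcev axioms together with the Leibniz rule one has Shestakov's identity $\{f,g,h\}\cdot\{f,g\}=0$; linearizing in $g$ gives $\{f,g,h\}\cdot\{f,d\}+\{f,d,h\}\cdot\{f,g\}=0$. Now only test functions are varied: if $J(x_0)(v_1,v_2,v_3)\neq 0$, then by (\ref{JP}) at least one $v_i$ pairs nontrivially with $P(x_0)$, and plugging in affine functions with gradients $v_1,\dots,v_4$ yields an immediate contradiction. The leverage comes from the multiplicative identity, which couples $J$ to $P$ pointwise without any derivatives of $J$, and that is precisely the ingredient your outline is missing.
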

\begin{proof} For any Poisson-Malcev bracket one can prove the identity \cite[Corollary 1]{Sh2000}
\begin{equation}
\{ f,g,h \} \cdot \{ f, g \} =0 . \label{Ivan}
\end{equation}
By a partial linearization, $g\to g+d$, it can be transformed to another identity
\begin{equation}
\{ f,g,h \} \cdot \{ f, d \} + \{ f,d,h \} \cdot \{ f, g \} =0. \label{linI}
\end{equation}
In a local coordinate system
\begin{equation}
\{ f,g,h\} = J^{ijk}(x)\partial_if\, \partial_j g\, \partial_k h\,,\label{P3}
\end{equation}
where
\begin{equation}
J^{ijk}=P^{il}\partial_l P^{jk}+P^{jl}\partial_l P^{ki}+P^{kl}\partial_l P^{ij} . \label{JP}
\end{equation}
Let us suppose that the Jacobiator is not identically zero. Then, there is a point $x_0\in \mathbb{R}^n$ and there are three vectors $v_1$, $v_2$ and $v_3$ in the cotangent space $T^*_{x_0}$ such that 
\begin{equation}
J(x_0)(v_1,v_2,v_3)=J^{ijk}(x_0)v_{1,i}v_{2,j}v_{3,k}\neq 0 .\label{neq}
\end{equation}
Moreover, by (\ref{JP}), at least one of the vectors, say $v_1$, should have a non-zero contraction with $P(x_0)$. This means that there is another vector $v_4\in T^*_{x_0}$ such that
\begin{equation}
P(x_0)(v_1,v_4)\neq 0. \label{neq2}
\end{equation}
By taking $f=v_{1,j}(x^j-x_0^j)$, $g=v_{4,j}(x^j-x_0^j)$, $h=v_{3,j}(x^j-x_0^j)$ and $d=v_{2,j}(x^j-x_0^j)$ in some vicinity of $x_0$, we obtain from (\ref{Ivan}) and (\ref{linI}), respectively 
\begin{eqnarray}
&&P(x_0)(v_1,v_4)J(x_0)(v_1,v_4,v_3)=0,\nonumber\\
&&P(x_0)(v_1,v_4)J(x_0)(v_1,v_2,v_3)=-P(x_0)(v_1,v_2)J(x_0)(v_1,v_4,v_3).\nonumber
\end{eqnarray}
This equations clearly contradict (\ref{neq}) and (\ref{neq2}). Thus, $\{ f,g,h \}$ should vanish identically meaning that the bracket is a Poisson-Lie one.
\end{proof}

Some restrictions on Jacobiator following from the Malcev identity were obtained previously in \cite{Gunaydin:2013nqa}.

\section{Alternative and right alternative algebras}\label{sec:alt}
We start this section with a very simple but important statement.

\begin{lemma} A DQA does not have nilpotent elements. \label{L2}
\end{lemma}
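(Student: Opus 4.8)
The plan is to exploit the formal-power-series structure together with locality of the star product. Suppose $f \in C^\infty(\mathbb{R}^n)[[\lambda]]$ is nilpotent, say $f^{\star N} = 0$ for some $N \geq 2$ (where $f^{\star N}$ denotes the $N$-fold star product, which is unambiguous for $N\le 3$ in a general DQA and we may take $N$ minimal so that associativity issues do not arise, or argue by iterating a single nilpotent $f$ with $f\star f=0$). Write $f = \sum_{r \geq 0} \lambda^r f_{(r)}$ and let $\lambda^k f_{(k)}$ be the lowest-order nonvanishing term, so $k$ is the $\lambda$-order of $f$. Since each $C_r$ is a bidifferential operator, the lowest-order term of $f \star f$ is $\lambda^{2k} f_{(k)} \cdot f_{(k)}$, the ordinary pointwise product. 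More generally, the lowest-order term of any $\star$-monomial of degree $m$ in $f$ is $\lambda^{mk}(f_{(k)})^m$, the ordinary $m$-th power, because the classical product is the $\lambda^0$ part of $\star$ and all $C_r$ raise the $\lambda$-degree.

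First I would reduce to the statement: if $(f_{(k)})^m \equiv 0$ as a smooth function for some $m$, then $f_{(k)} \equiv 0$. This is immediate — $C^\infty(\mathbb{R}^n)$ with the pointwise product is an integral domain on each connected component, and more simply has no nilpotents at all, since $(f_{(k)}(x))^m = 0$ forces $f_{(k)}(x) = 0$ pointwise. Hence $f_{(k)} \equiv 0$, contradicting the minimality of $k$ unless $f = 0$ identically. So the whole argument is: assume $f\ne 0$, extract the leading $\lambda$-coefficient, observe that the leading coefficient of $f^{\star N}$ is the $N$-th pointwise power of a nonzero smooth function hence nonzero, so $f^{\star N}\ne 0$.

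The only subtlety — and the step I expect to need the most care — is bookkeeping the lowest-order term of an iterated star product that is \emph{not} assumed associative: one must check that however the $N$ factors of $f$ are bracketed, the $\lambda$-leading term is always $\lambda^{Nk}(f_{(k)})^N$, with no cancellation between different bracketings (there is none, since we are looking at a single bracketing at a time, and within it the leading term comes purely from iterating the pointwise product, which \emph{is} associative and commutative). I would phrase this as a short induction on $N$: if $g$ is a $\star$-monomial in $f$ with leading term $\lambda^{jk}(f_{(k)})^j$ and $h$ similarly with $\lambda^{j'k}(f_{(k)})^{j'}$, then $g \star h$ has leading term $\lambda^{(j+j')k} (f_{(k)})^{j+j'}$ since $C_r(g,h)$ contributes only at order $\lambda^{(j+j')k + r}$ with $r \geq 1$. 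This closes the induction and the proof.
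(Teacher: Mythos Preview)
Your argument is correct and follows exactly the same line as the paper: pick the lowest nonvanishing $\lambda$-coefficient $f_{(k)}$, observe that the leading term of any $\star$-power is the corresponding pointwise power $\lambda^{Nk}(f_{(k)})^N$, and use that $C^\infty(\mathbb{R}^n)$ has no nilpotents. You are in fact slightly more careful than the paper in explicitly justifying, via the short induction on bracketings, that non-associativity of $\star$ does not disturb the leading term --- the paper simply writes $f_\star^k=\lambda^{rk}f_{(r)}^k+\mathcal{O}(\lambda^{rk+1})$ without comment.
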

\begin{proof}
Suppose that there is a non-zero element $f$ in $C^\infty (\mathbb{R}^n)[[\lambda ]]$ such that $f^k_\star =0$, $k\in \mathbb{N}$. (The power is calculated with the star product). Let $\lambda^r f_{(r)}$ be the lowest non vanishing order in the $\lambda$-expansion. Then, $f_\star^k=\lambda^{rk} f^k_{(r)} +\mathcal{O}(\lambda^{rk+1})$, where $f^k_{(r)}$ is computed with the point wise product. Since $f^k_{(r)}=0$ implies $f_{(r)}=0$, we have a contradiction.
\end{proof}

The case of right alternative algebras may be treated with the help of the following Theorem \cite{IvanBook}.

\begin{theorem}[Kleinfeld] Every right alternative algebra without nilpotent elements is alternative.
\end{theorem}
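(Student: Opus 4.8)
This is a classical fact about right alternative rings, due to Kleinfeld, so the plan is to reproduce the identity calculus behind it; everything goes through because $C^\infty(\mathbb{R}^n)[[\lambda]]$ is a $\mathbb{Q}$-algebra and all polarizations are legitimate. Write $(f,g,h):=(f\star g)\star h-f\star(g\star h)$ for the associator. The first observation is that, for an algebra that is \emph{already} right alternative, the sought conclusion is equivalent to left alternativity $(f,f,g)\equiv0$: polarizing $(f,g,g)=0$ gives $(f,g,h)=-(f,h,g)$, polarizing $(f,f,g)=0$ gives $(f,g,h)=-(g,f,h)$, and the transpositions of the first two and of the last two arguments generate $S_3$, so these two laws together force $(f_{\sigma(1)},f_{\sigma(2)},f_{\sigma(3)})=\mathrm{sgn}(\sigma)\,(f_1,f_2,f_3)$ for all $\sigma\in S_3$ --- which is precisely ``alternative'' in the sense defined above. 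Hence it suffices to prove $(f,f,g)\equiv0$.

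For that I would start from the ``associator cocycle'' identity, which holds in every non-associative algebra,
\[(f\star g,h,k)-(f,g\star h,k)+(f,g,h\star k)=(f,g,h)\star k+f\star(g,h,k),\]
and feed right alternativity --- together with its polarization $(a,b,c)=-(a,c,b)$ --- into it, iterating and re-substituting. This generates the system of identities established by Kleinfeld, whose net effect is that the left-alternative defect $D(f,g):=(f,f,g)$ generates a \emph{nil} two-sided ideal: one shows that a bounded power of every element built from the $D(f,g)$'s and arbitrary algebra elements vanishes. This last step is the technical heart of the theorem and the one I expect to be the genuine obstacle --- a delicate, essentially unavoidable chain of substitutions in the identities above, with no conceptual shortcut, exactly as in Kleinfeld's original argument.

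Given that, the conclusion follows at once. By Lemma~\ref{L2} a DQA contains no nonzero nilpotent element, so the nil ideal just produced must vanish; therefore $(f,f,g)=0$ for all $f,g$, the DQA is left alternative, and by the first paragraph its associator is totally antisymmetric, i.e.\ the DQA is alternative. (For the abstract ring-theoretic statement one argues identically, taking ``no nonzero nilpotent elements'' as the hypothesis.) Finally, I would note that the unital right Moufang strengthening (\ref{ral})--(\ref{Mouf}) appearing in the definition of a right alternative DQA is more than is required here: the bare identity $(f,g,g)\equiv0$ together with Lemma~\ref{L2} already drives the whole argument.
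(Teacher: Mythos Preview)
The paper does not prove this theorem at all: it is quoted verbatim from \cite{IvanBook} as a classical result and then immediately combined with Lemma~\ref{L2} to conclude that right alternative DQAs are alternative. There is no ``paper's own proof'' to compare your proposal against.

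Your sketch therefore attempts more than the paper does. The reduction in your first paragraph is correct and standard: in characteristic $\neq 2$, right alternativity polarizes to antisymmetry of the associator in its last two slots, so adding left alternativity (antisymmetry in the first two) yields full alternativity. Your second paragraph correctly names the Teichm\"uller identity as the starting point and honestly flags the hard step --- producing nilpotent elements from the left-alternative defect $(f,f,g)$ --- as the one you are not actually carrying out. That is a fair self-assessment: this is exactly where all the work in Kleinfeld's argument lives, and what you have written is a roadmap rather than a proof. If you intend this as a citation with commentary (``this is Kleinfeld's theorem; here is the shape of the argument''), it is fine and in fact more informative than the paper. If you intend it as a self-contained proof, the central computation is missing.

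One minor point: the theorem as stated is a pure ring-theoretic fact, with ``no nilpotent elements'' as a hypothesis. Your write-up drifts into invoking Lemma~\ref{L2} and the DQA setting inside the proof; that belongs to the \emph{application} of the theorem (the sentence following it in the paper), not to the proof of the theorem itself. Keep the two separate.
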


Together with the Lemma \ref{L2} this Theorem implies that any right alternative DQA is alternative.

We are going to demonstrate in this section that alternative DQAs are almost always associative. To this end, let us study restrictions on the associator. Alternative algebras have a number of identities. One of them reads \cite[p.\ 149]{IvanBook}
\begin{equation}
\left( A([g,h]^2_\star,r,s) \right)^2_\star =0\,, \label{id}
\end{equation}
where 
\begin{equation}
[g,h]:=g\star h - h\star g .\label{commut}
\end{equation} 
By Lemma \ref{L2}, the identity (\ref{id}) yields another identity
\begin{equation}
A([g,h]^2_\star,r,s)  =0\,. \label{id2}
\end{equation}
Due to the results of previous section, we may use the heavy machinery of Poisson geometry\footnote{We should again warn the reader on some terminology mismatch between the present paper and some other literature, e.g. \cite{Vaisman,Waldmann}.} \cite{Vaisman}. In particular, the Wienstein splitting Theorem \cite{We1983,Vaisman,Waldmann} affirms (after being translated to our terminology) that if $\{ . , . \}$ is a Poisson-Lie bracket, each point $x_0$ has an open neighborhood $U$ with a centered coordinate chart $(q^a,p^a,y^\alpha)$ such that 
\begin{equation}
\{ f, g\}=\sum_a \bigl( \partial_{q^a} f \partial_{p^a} g - \partial_{q^a} g \partial_{p^a} f\bigr) + \sum_{\alpha,\beta} \varphi^{\alpha\beta}(y) \partial_{y^\alpha}f \partial_{y^\beta}g \label{split}
\end{equation} 
and $\varphi^{\alpha\beta}$ vanishes at $x_0$. It is important, that if the Poisson-Lie bivector does not vanish at $x_0$, there is a least one pair of the coordinates of the type $q^a,p^a$, call it $(q^1,p^1)$. For any given smooth function $\psi$ there is always a function $g\in C^\infty(\mathbb{R}^n)$ such that the  differential equation 
\begin{equation*}
\partial_{q^1} g = \tfrac 12 \psi 
\end{equation*}
holds in $U$. Thus, by taking $h=p^1$ in $U$, we conclude that for any smooth $\psi$ there is a pair $g,h$ such that on $U$
\begin{equation}
\{ g,h\} = \tfrac 12 \psi \qquad \mbox{and} \qquad [g,h]=\lambda \psi +\mathcal{O}(\lambda^2)\,. \label{gh}
\end{equation}

Let us now show that any element of the algebra can be approximated near $x_0$ by a sum of squared commutators and constants.
Take arbitrary $f\in C^\infty(\mathbb{R}^n)[[\lambda]]$ and consider its' formal expansion $f=f_{(0)}+\sum_{l=1}^\infty \lambda^l f_{(l)}$. Let us fix some positive integer $k$ and request that
\begin{equation}
f=\sum_{l=0}^k \bigl( \lambda^l c_l + \lambda^{l-2} [g_{l},h_{l}]^2_\star \bigr) +\mathcal{O}(\lambda^{k+1}) \label{approx}
\end{equation}
in $U$ for some constants $c_1$ and functions $g_l$, $h_l$. From now on we need $U$ being bounded. If it is not, we may pass to a bounded sub-neighborhood of $x_0$ without affecting any of the statements made above. At the zeroth order of $\lambda$ Eq.\ (\ref{approx}) reads
\begin{equation}
f_{(0)}=c_0 + ([g_0,h_0]_{(1)})^2 \,. \label{0th}
\end{equation}
$f_{(0)}$ is a smooth function $\mathbb{R}^n$. Thus it is bounded on (a bounded domain) $U$. The constant $c_0$ can be used to shift $f_{(0)}$ away from $0$ on $U$, so that the square root
\begin{equation}
\sqrt{ f_{(0)} -c_0} = [g_0,h_0]_{(1)} \label{sol0}
\end{equation}
is smooth. Thus, by (\ref{gh}), there is a pair of functions $g_0,h_0$ such that eq.\ (\ref{0th}) is satisfied, and (\ref{approx}) is true to the order $\lambda^0$. In a similar manner, one shows that the order $j$ of Eq.\ (\ref{approx})
\begin{equation}
f_{(j)}=c_j + ([g_j,h_j]_{(1)})^2 +\sum_{m=0}^{j-1} \bigl( [g_m,h_m]^2_\star)_{(j-m+2)} \label{jth}
\end{equation}
can be satisfied by a suitable choice of $c_j$, $g_j$ and $h_j$.

By substituting (\ref{approx}) in the identity (\ref{id2}) one shows, that on $U$ the associator of three arbitrary functions $f$, $r$ and $s$ vanishes at least to the order $k$:
\begin{equation}
A(f,r,s)_{(j)}=0\qquad \mbox{for} \qquad j\leq k. \label{idk}
\end{equation}
Since $k$ is arbitrary, one can take the limit $k\to \infty$ (which is an honest limiting procedure in the $\lambda$-adic topology of $C^\infty (\mathbb{R}^n)[[\lambda]]$) to obtain

\begin{lemma}\label{L4} Let $P$ be a Poisson-Lie bivector which does not vanish at some point $x_0\in \mathbb{R}^n$. Then there is an open neighborhood $U$ of $x_0$ such that for arbitrary alternative DQA that quantizes $P$ the associator vanishes in $U$.
\end{lemma}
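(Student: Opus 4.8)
The plan is to reduce the identity $A(f,r,s)\equiv 0$ on a neighborhood of $x_0$ to the already-established special identity (\ref{id2}), which says that the associator vanishes whenever its first argument is a squared star commutator. The crucial fact to establish along the way is that, locally near $x_0$, every element of the algebra agrees up to any prescribed order in $\lambda$ with a finite sum of squared commutators and constant functions; linearity of $A$ in its first slot together with $\lambda$-adic continuity then finishes the argument.

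First I would set up the local geometry. By the results of the previous sections an alternative DQA quantizes a Poisson-Lie bracket, so the Weinstein splitting theorem applies; since $P(x_0)\neq 0$, a centered chart around $x_0$ contains at least one canonically conjugate pair $(q^1,p^1)$. Solving $\partial_{q^1}g=\tfrac12\psi$ and putting $h=p^1$ realizes an arbitrary smooth $\psi$ as $\{g,h\}=\tfrac12\psi$, hence as the order-$\lambda$ term of $[g,h]=\lambda\psi+\mathcal{O}(\lambda^2)$, as in (\ref{gh}); then I would pass to a bounded $U$. Next comes the inductive approximation (\ref{approx}): at order $\lambda^0$ write $f_{(0)}=c_0+([g_0,h_0]_{(1)})^2$, choosing the constant $c_0$ so that $f_{(0)}-c_0>0$ on the bounded set $U$, whence the square root is smooth, and then picking $g_0,h_0$ via (\ref{gh}); at order $\lambda^j$ one solves (\ref{jth}) the same way, absorbing the already-known lower-order tails of $\sum_m[g_m,h_m]^2_\star$ into $c_j$ and $([g_j,h_j]_{(1)})^2$. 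Feeding (\ref{approx}) into (\ref{id2}), using that $A$ is $\mathbb{R}[[\lambda]]$-linear in its first argument, that constant functions associate trivially because the unit function is a unity, and that $A([g,h]^2_\star,r,s)=0$ exactly, the remainder contributes only at order $\lambda^{k+1}$, so $A(f,r,s)_{(j)}=0$ for $j\le k$. Since $k$ is arbitrary and $\lambda$-adic limits are legitimate in $C^\infty(\mathbb{R}^n)[[\lambda]]$, $A(f,r,s)\equiv 0$ on $U$.

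I expect the main obstacle to be the smoothness bookkeeping in the inductive step: one must guarantee at every order that the function from which a square root is extracted stays bounded away from zero on $U$, which is precisely why $U$ is taken bounded and a free additive constant $c_j$ is retained at each order. A minor point to check is that the negative powers $\lambda^{l-2}$ in (\ref{approx}) are harmless — $[g,h]^2_\star$ starts at order $\lambda^2$, so $\lambda^{l-2}[g,h]^2_\star$ is a genuine element, and since $A([g,h]^2_\star,r,s)$ vanishes identically, multiplying it by $\lambda^{-2}$ causes no trouble.
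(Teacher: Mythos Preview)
Your proposal is correct and follows essentially the same route as the paper: Weinstein splitting to produce a canonical pair, realizing arbitrary smooth functions as leading terms of commutators via (\ref{gh}), the inductive squared-commutator-plus-constant approximation (\ref{approx}) on a bounded $U$, and then feeding this into the identity (\ref{id2}) together with $\lambda$-adic continuity. Your remarks on the constants associating trivially and on the $\lambda^{l-2}$ prefactors being harmless (since $[g,h]^2_\star$ starts at order $\lambda^2$ and $\lambda$ is not a zero divisor) are correct refinements of points the paper leaves implicit.
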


There two possible ways to make the statement of Lemma \ref{L4} global. First, consider the case when $P$ is non-vanishing almost everywhere (on a dense subset of $\mathbb{R}^n$). Since $A(f,r,s)$ is a continuous function for any smooth $f$, $r$ and $s$, one can extend the identity $A(f,r,s)=0$ to the whole $\mathbb{R}^n$ by continuity. 

In the case which is opposite to the one mentioned in the previous paragraph, $P$ vanishes identically in some ball in $\mathbb{R}^n$. However, if we assume that the symbols of bidifferential operators $C_r$ are local polynomials of $P$ and its' derivatives, all $C_r(f,g)$ vanish in the ball. Thus, only the point wise term remains in the star product (\ref{starex}), so that the associator vanishes as well. 

We arrive at the main result of this paper.

\begin{theorem}\label{T5}
Suppose that an alternative DQA satisfies at least one of the following conditions.\\
(a) The bivector $P$ is non-vanishing on a dense set in $\mathbb{R}^n$.\\
(b) The symbols of bidifferential operators $C_r$ are local polynomials of $P$ and its' derivatives.\\
Then such DQA is associative.
\end{theorem}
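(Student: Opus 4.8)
The plan is to globalize Lemma \ref{L4} --- which already carries all of the analytic weight --- by a short point-set topology argument, splitting into the two cases (a) and (b). Fix arbitrary $f,r,s\in C^\infty(\mathbb{R}^n)[[\lambda]]$. Because the $C_r$ are bidifferential, hence local, operators, each coefficient $A(f,r,s)_{(j)}$ of the associator is built from finitely many bidifferential operators applied to smooth functions, so it is an ordinary smooth, in particular continuous, function on $\mathbb{R}^n$; it therefore suffices to prove that every such $A(f,r,s)_{(j)}$ vanishes identically. Set $V=\{x\in\mathbb{R}^n:P(x)\neq 0\}$, an open subset, and $Z=\mathbb{R}^n\setminus V$, which is closed. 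By Lemma \ref{L4}, every point of $V$ has an open neighborhood on which the associator vanishes, and these neighborhoods cover $V$, so $A(f,r,s)\equiv 0$ on $V$.

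Under hypothesis (a) the set $V$ is dense, so each continuous function $A(f,r,s)_{(j)}$ vanishes on a dense set and hence on all of $\mathbb{R}^n$; thus $A(f,r,s)\equiv 0$ and the DQA is associative.

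Under hypothesis (b) I would instead decompose $\mathbb{R}^n=V\cup\operatorname{int}(Z)\cup\partial Z$, using $Z=\operatorname{int}(Z)\cup\partial Z$ since $Z$ is closed, and handle the three strata in turn. On $V$ the associator vanishes by Lemma \ref{L4}, as above. On $\operatorname{int}(Z)$ the bivector $P$ vanishes in a neighborhood of each point, hence so do all of its derivatives; since by (b) the symbols of the $C_r$ are local polynomials in $P$ and its derivatives, every $C_r$ vanishes on $\operatorname{int}(Z)$, the star product reduces there to the pointwise product, and the associator vanishes. Finally $\partial Z\subseteq\overline V$, so on $\partial Z$ the vanishing of each continuous coefficient $A(f,r,s)_{(j)}$ follows from its vanishing on $V$. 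Combining the three strata gives $A(f,r,s)\equiv 0$ on $\mathbb{R}^n$, so again the DQA is associative.

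I expect no real obstacle in Theorem \ref{T5} itself: the substance is entirely in Lemma \ref{L4}, specifically in the representation (\ref{approx}) of an arbitrary element as a formal sum of squared commutators and constants, which rests on the Weinstein splitting normal form (\ref{split}) and on solving the transport equation $\partial_{q^1}g=\tfrac12\psi$ over a bounded chart. The one point in the present argument that warrants care is that when $V$ is not dense one cannot simply say ``$P$ vanishes on a ball and we are done''; one must also dispose of the boundary stratum $\partial Z$ by continuity --- which is precisely why the density condition (a) and the locality-of-symbols condition (b) appear as alternative hypotheses rather than being merged into one.
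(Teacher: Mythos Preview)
Your proposal is correct and follows the same globalization strategy as the paper: density plus continuity for (a), and ``star product reduces to pointwise where $P$ vanishes'' for (b), with Lemma~\ref{L4} doing the real work. Your treatment of case (b) is in fact slightly more careful than the paper's: the paper only says that on a ball where $P$ vanishes the associator vanishes, leaving the passage to all of $\mathbb{R}^n$ implicit, whereas you explicitly decompose $\mathbb{R}^n=V\cup\operatorname{int}(Z)\cup\partial Z$ and handle $\partial Z\subseteq\overline{V}$ by continuity --- a point the paper glosses over.
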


Note that if an alternative DQA does not satisfy the condition (b), but an equivalent algebra does, both algebras are associative. Thus, the family of star products to which the condition (b) applies is indeed very large. For example, by the Kontsevich Theorem \cite{Kontsevich:1997vb} all associative star products are in this family. 

Theorem \ref{T5} implies Theorem 1 of \cite{Bojowald:2016lnl} as a particular case.

\section{Flexible algebras}\label{sec:flex}
To discuss the case of flexible algebras, we need a characteristic property of these algebras \cite{An1966}, see also \cite{BO1981}. Let $B$ be some algebra with the product $x,y\mapsto xy$. Define another algebra $B^+$ on the same linear space with a symmetrized (Jordan) product $x\circ y =xy+yx$. The algebra $B$ is flexible if and only if the commutator $[x,y]=xy-yx$ is a derivation of $B^+$. As has been noticed in \cite{Bojowald:2016lnl}, precisely this property makes flexible algebras physically relevant: one can consistently define the evolution equations on $B^+$ with the help of commutator in $B$.

In contrast to the case of alternative star products, flexible but non-associative star products do always exist. For example, the product
\begin{equation}
f\star g = f\cdot g +\lambda \{ f,g\}. \label{flexstar}
\end{equation}
with $C_r \equiv 0$ for $r\geq 2$ is flexible for any bivector $P$ and is not associative even if $P$ is Poisson-Lie.

We are not aware of any other examples of flexible DQAs.

\section{Conclusions}
In this paper, we analyzed nearly associative algebras as candidates for DQAs. We have demonstrated that alternative DQAs require Jacobi identity on the Poisson structure, which makes them useless in the context of non-geometric fluxes in string theory. Moreover, under very general assumptions alternative and right alternative DQAs are associative. From a practical point of view, this means the end of the story of alternative deformation quantization. On the contrary, a flexible deformation quantization exists for any Poisson bracket.

\section*{Acknowledgments}
We are grateful to Ivan Shestakov, Vlad Kupriyanov and Richard Szabo for fruitful discussions. One of the authors (D.V.) was supported in parts by by the grant 2016/03319-6 of the S\~ao Paulo Research Foundation (FAPESP),  by the grant 303807/2016-4 of CNPq, by the RFBR project 18-02-00149-a, and by the Tomsk State University Competitiveness Improvement Program. The other author (F.M.C.O.) was supported by CAPES.

\end{document}